\documentclass[11pt]{article}
\usepackage{amsmath,amssymb,amsthm}

\title{Aharonov--Bohm Type Arbitrage and Homological Obstructions in Financial Markets}

\author{
Takanori Adachi$^{1,2}$\thanks{This work was supported by JSPS KAKENHI Grant Number 24K04941.\\
Email: \texttt{taka.adachi@tmu.ac.jp}
}
\and
Keisuke Hara$^{2}$
}

\date{}

\newtheorem{definition}{Definition}
\newtheorem{remark}{Remark}
\newtheorem{proposition}{Proposition}

\newcommand{\Prob}{\mathbf{Prob}}

\newcommand{\Hol}{\mathrm{Hol}}
\newcommand{\Exp}{\mathcal{E}}

\begin{document}
\maketitle

\vspace{-1.5em}

\begin{center}
{\small
$^{1}$Graduate School of Management, Tokyo Metropolitan University\\
$^{2}$Association of Mathematical Finance Laboratory
}
\end{center}

\vspace{1.0em}

\begin{abstract}
\noindent
We introduce a simplicial and categorical formulation of
Aharonov--Bohm (AB) type arbitrage in filtered market systems.
Given a filtration modeled as a contravariant functor
$F : \mathcal T^{op} \to \mathbf{Prob},$
we consider the associated conditional expectation transport functor
$\mathcal E \circ F : \mathcal T^{op} \to \mathbf{Ban},$
and the canonical distortion
$dF(i) := (\mathcal E \circ F)(i)(1),$
which measures the failure of constant functions to be preserved under
non-measure-preserving transitions.

Motivated by the multiplicative transport structure of $dF$, 
we introduce a simplicial distortion operator defined recursively on the nerve 
$N_\bullet(\mathcal T)$ of the time category.
This construction describes recursively accumulated transported
distortions along composable chains of morphisms and leads naturally to
a notion of holonomy along loops.

We interpret non-trivial holonomy as a global observable invisible at
the level of individual transitions, 
analogous to the Aharonov--Bohm effect in physics.
This yields a notion of AB arbitrage, in which arbitrage opportunities
arise from global loop effects rather than local price discrepancies.

We further introduce simplicial admissibility conditions ensuring that
recursively accumulated distortions remain integrable, and show how
non-trivial holonomy can be translated into base-time observable self-financing
trading strategies through executable loop dynamics.
This establishes a connection between categorical holonomy structures
and economically realizable arbitrage.

The framework developed here suggests a global and homological
perspective on arbitrage theory, in which market inconsistencies are
encoded by recursively accumulated simplicial distortions and their
holonomy along loops in the underlying time category.
\end{abstract}

\section{Introduction}
\label{sec:introduction}

Arbitrage theory has traditionally been formulated in terms of local
price consistency and martingale structures on filtered probability
spaces. In classical financial mathematics, the absence of
arbitrage is closely related to the existence of equivalent martingale
measures, and market consistency is described through local stochastic
dynamics.

On the other hand, several geometric approaches to financial markets
have suggested that arbitrage phenomena may also possess global and
topological aspects. In particular, gauge-theoretic formulations of
finance have emphasized analogies between financial transport structures
and geometric holonomy phenomena. Motivated by the Aharonov--Bohm effect
in quantum physics, where globally nontrivial phase effects arise even
under locally flat gauge fields, we investigate a categorical framework
in which global arbitrage emerges from holonomy-like structures on time
categories.

The present paper develops a simplicial formulation of
Aharonov--Bohm arbitrage. The starting point is a filtration
\[
F: \mathcal{T}^{op} \to \Prob
\]
viewed as a contravariant functor from a time category
$\mathcal T$ to the category $\Prob$ of probability spaces and
null-preserving morphisms. Composing $F$ with conditional expectation
yields a transport functor
\[
\Exp\circ F,
\]
which transports integrable random variables backward along morphisms of
$\mathcal T$.

A central object of the paper is the simplicial distortion operator
\[
dF:N_\bullet(\mathcal T)\to \bigsqcup_{t\in\mathcal T}L^1(F(t)),
\]
defined recursively on simplices of the nerve
$N_\bullet(\mathcal T)$.
Unlike ordinary multiplicative cocycles, the values of $dF$ live in
different $L^1$-spaces depending on the base object of the simplex.
The recursive simplicial construction describes 
how local distortions are transported and recursively accumulated along composable chains of morphisms.
By transporting the accumulated distortion to a common probability space at each stage,
the construction yields a well-defined global observable associated with the chain.

The resulting simplicial distortion measures recursively accumulated
transported distortions along composable chains of morphisms.
For loops in the time category, this accumulation produces a holonomy
object
\[
\Hol(\gamma):=dF(\gamma),
\]
which plays the role of a global market observable.
Nontrivial holonomy is interpreted as a global observable
not represented by any single local transport,
but arising from the recursive accumulation of transported distortions along a loop.

The simplicial viewpoint introduced in this paper clarifies the global
nature of arbitrage phenomena. While the transport functor
$\Exp\circ F$ itself is functorial, recursively accumulated
distortions along loops may nevertheless generate nontrivial holonomy.
Thus AB arbitrage appears not as a local failure of transport, but as a
global effect produced by transported multiplicative accumulation.

Several geometric approaches to financial markets have suggested that
arbitrage phenomena may possess global and topological aspects.
Early gauge-theoretic viewpoints appeared in the work of Young
\cite{young_1999},
where foreign exchange markets were interpreted through lattice gauge
structures and holonomy-like effects.
More recently, geometric and spectral approaches to arbitrage developed
by Farinelli and Takada have further emphasized global geometric
structures underlying market dynamics

\noindent
\cite{FT_2021}, \cite{FT_2022}.

Motivated by these developments, as well as by the Aharonov--Bohm effect
in quantum physics, where globally nontrivial phase effects arise even
under locally consistent gauge fields, we investigate a categorical
framework in which global arbitrage emerges from recursively accumulated
holonomy-like structures on abstract time categories.

The paper is organized as follows.
Section \ref{sec:categoricalSetup}
reviews the categorical formulation of filtrations and conditional expectation transport.
Section \ref{sec:distortion}
introduces distortion operators on morphisms and discusses
their transported multiplicative structure.
Section \ref{sec:simpAdmi}
develops the simplicial distortion operator on the nerve
$N_\bullet(\mathcal T)$
and introduces simplicial admissibility.
Section \ref{sec:ABarbHolonomy}
defines holonomy and formulates Aharonov--Bohm arbitrage as a
global simplicial distortion phenomenon along loops in the time category.
Section \ref{sec:predTrading} 
shows how non-trivial holonomy can be translated into
base-time observable self-financing trading strategies through executable loop dynamics.
Section \ref{sec:example}
presents finite examples illustrating the emergence of
non-trivial holonomy and realizable AB arbitrage.
Finally, 
Section \ref{sec:admissibility}
introduces admissibility conditions ensuring that
loop-based holonomy structures correspond to economically realizable trading strategies.

\section{Categorical Setup}
\label{sec:categoricalSetup}

Let $\mathcal{T}$ be a small category representing time or information structure.

\begin{definition}[\cite{AR_2019}]
Let $\mathrm{Prob}$ denote the category whose objects are probability spaces
$(X,\mathcal{F},\mu)$ and whose morphisms
\[
\varphi : (X,\mathcal{F},\mu)\to (Y,\mathcal{G},\nu)
\]
are measurable maps such that $\varphi^{-1}$ preserves null sets,
i.e., $\nu(B)=0 \Rightarrow \mu(\varphi^{-1}(B))=0$.
This condition ensures that the associated conditional expectation operator
is well-defined via the Radon--Nikodym theorem.
\end{definition}

We model a filtration as a contravariant functor
\[
F : \mathcal{T}^{op} \to \mathrm{Prob}
\]
\cite{ANR_2020b}, \cite{Adachi_2025a}.
This formulation generalizes classical filtrations and stochastic processes,
in which time is linearly ordered, to more general categorical structures.


\medskip
\noindent
Next, define a functor
\[
\mathcal{E} : \mathrm{Prob} \to \mathrm{Ban},
\]
which assigns to each probability space its $L^1$ space, and to each morphism
$\varphi$ the conditional expectation operator
\[
\mathcal{E}(\varphi) : L^1(X) \to L^1(Y),
\]
characterized by the identity
\[
\int_B \mathcal{E}(\varphi)(f)\, d\nu
=
\int_{\varphi^{-1}(B)} f\, d\mu,
\qquad B\in\mathcal G,
\]
for $f \in L^1(X)$
\cite{AR_2019}.

\begin{remark}
A concrete representation of the conditional expectation
$
\mathcal{E}(\varphi)(f)
$
is a Radon-Nikodym derivative
\[
\mathcal{E}(\varphi)(f) 
		=
\frac{
	d (
		f \mu \circ \varphi^{-1}
	)
}{
	d \nu
} ,
\]
where for every 
$A \in \mathcal{F}$,
$
(f \mu)(A)
	:=
\int_A f d \mu.
$ 
\end{remark}

Thus, we obtain a composed functor
\[
\mathcal{E} \circ F : \mathcal{T}^{op} \to \mathrm{Ban}.
\]

\medskip
\noindent
Throughout the paper,
morphisms appearing in conditional expectation transport 
are regarded as arrows of the opposite category
$\mathcal T^{op}$.

\begin{remark}
A key feature of this setup is that, in general,
for an arrow $i$ in  $\mathcal{T}^{op}$, the operator $(\mathcal{E}\circ F)(i)$ 
does not preserve constant functions.
\end{remark}

\section{Distortion and $F$-Martingales}
\label{sec:distortion}

\begin{definition}[Distortion]
\label{defn:simpleDistortion}
For an arrow 
$i:t\to s$
in $\mathcal{T}^{op}$, define
\begin{equation}
\label{eq:defDf}
dF(i):=(\mathcal{E}\circ F)(i)(1_{F(t)}) \in L^1(F(s)).
\end{equation}
We call $dF(i)$ the \emph{distortion} associated with $i$.
\end{definition}

\begin{remark}
The distortion $dF(i)$ can be written as
\[
dF(i)
		=
\frac{
	d \big(
		\mu_t \circ F(i)^{-1}
	\big)
}{
	d \mu_s
} .
\]

\end{remark}

\begin{remark}
If the morphism $F(i)$ is measure-preserving, then
\[
(\mathcal{E}\circ F)(i)(1_{F(t)}) = 1_{F(s)},
\]
so $dF(i)$ is trivial. In general, $dF(i)$ captures the failure of constant
functions to be preserved under $(\mathcal{E}\circ F)(i)$.
\end{remark}

\begin{remark}
The present paper focuses on the distortion induced by the unit random
variable
\[
1_{F(t)}\in L^1(F(t)).
\]
Consequently, the distortion operator $dF$ captures the failure of
constant functions to be preserved under the transport functor
$
\Exp \circ F.
$
A more general operator-theoretic treatment involving arbitrary
integrable random variables
\[
X\in L^1(F(t))
\]
and their transported distortions will be investigated elsewhere.
\end{remark}

\medskip

\noindent
We now define a generalized martingale adapted to this distortion.

\begin{definition}[$F$-martingale]
\label{defn:Fmartingale}
A family $f=(f_t)$ with $f_t \in L^1(F(t))$ is called an \emph{$F$-martingale}
if for every arrow $i:t\to s$ in $\mathcal{T}^{op}$,
\[
(\mathcal{E}\circ F)(i)(f_t)=f_s \cdot dF(i).
\]
\end{definition}

\begin{remark}
In the classical case where all morphisms are measure-preserving, one has
$dF(i)=1$, and the above reduces to the usual martingale condition
\[
(\mathcal{E}\circ F)(i)(f_t)=f_s.
\]
Thus, $dF$ measures the deviation from classical martingale behavior.
\end{remark}

\medskip

\noindent
The distortion satisfies a multiplicative consistency relation.

\begin{proposition}[Multiplicativity of distortion]
\label{prop:multiplDistort}
For composable arrows 
$i:t \to s$
and
$j:u \to t$
in $\mathcal{T}^{op}$, we have
\begin{equation}
\label{eq:dFcomp}
dF(i \circ j)
=
(\mathcal{E}\circ F)(i)\bigl(dF(j)\bigr).
\end{equation}
\end{proposition}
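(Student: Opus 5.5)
The plan is to reduce the statement to functoriality of $\mathcal{E}\circ F$, and then to use the defining identity of $\mathcal{E}$ to check that functoriality by hand. First the bookkeeping. With $i:t\to s$ and $j:u\to t$ in $\mathcal T^{op}$, the composite $i\circ j:u\to s$ is an arrow of $\mathcal T^{op}$. By Definition~\ref{defn:simpleDistortion} we have $dF(i\circ j)=(\mathcal E\circ F)(i\circ j)(1_{F(u)})\in L^1(F(s))$ and $dF(j)=(\mathcal E\circ F)(j)(1_{F(u)})\in L^1(F(t))$. Hence the right-hand side of \eqref{eq:dFcomp} is a well-defined element of $L^1(F(s))$, and the two sides live in the same space.

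Granting that $\mathcal E\circ F$ is a functor, as stated in Section~\ref{sec:categoricalSetup}, the conclusion is immediate:
\[
(\mathcal E\circ F)(i\circ j)(1_{F(u)})=(\mathcal E\circ F)(i)\bigl((\mathcal E\circ F)(j)(1_{F(u)})\bigr)=(\mathcal E\circ F)(i)\bigl(dF(j)\bigr).
\]

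The only real content is therefore the functoriality of $\mathcal E$ on composites. I would verify it directly for $\varphi:(X,\mu)\to(Y,\nu)$ and $\psi:(Y,\nu)\to(Z,\rho)$, with $F(i\circ j)$ equal to the appropriate composite $\psi\circ\varphi$ of $F(j)$ and $F(i)$ given by contravariance of $F$. For $f\in L^1(X)$ and $C$ measurable in $Z$, apply the characterizing identity twice:
\[
\int_C \mathcal E(\psi)\bigl(\mathcal E(\varphi)f\bigr)\,d\rho=\int_{\psi^{-1}(C)}\mathcal E(\varphi)f\,d\nu=\int_{\varphi^{-1}(\psi^{-1}(C))}f\,d\mu=\int_{(\psi\circ\varphi)^{-1}(C)}f\,d\mu .
\]
This is the identity characterizing $\mathcal E(\psi\circ\varphi)f$. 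Uniqueness of Radon--Nikodym derivatives, valid because the composite of null-preserving maps is null-preserving, then gives equality $\rho$-a.e.

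The main obstacle is keeping the variance straight: one must make sure that the composite in $\mathcal T^{op}$ corresponds to the right composite of measurable maps in $\mathrm{Prob}$, so that the order "first $j$, then $i$" on the right-hand side matches. Once the directions are fixed, the computation above closes the argument. I would also note that the equality holds in $L^1$, that is, a.e., which is all the statement requires.
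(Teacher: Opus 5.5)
Your proposal is correct and follows the paper's proof: both write $dF(i\circ j)=(\mathcal{E}\circ F)(i\circ j)(1_{F(u)})$ and conclude by functoriality of $\mathcal{E}\circ F$, with $F(i\circ j)=F(i)\circ F(j)$. The paper takes functoriality of $\mathcal{E}$ for granted (citing earlier work), while you verify it from the defining integral identity, which is a correct addition; the only quibble is that null-preservation of the composite is what guarantees the \emph{existence} of $\mathcal{E}(\psi\circ\varphi)$, whereas uniqueness of the density $\rho$-a.e.\ holds regardless.
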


\begin{proof}
By definition,
\[
dF(i \circ j)
=
(\mathcal{E}\circ F)(i \circ j)(1_{F(u)}).
\]
By functoriality of $\mathcal{E}\circ F$,
\[
(\mathcal{E}\circ F)(i \circ j)
=
(\mathcal{E}\circ F)(i)\circ (\mathcal{E}\circ F)(j),
\]
hence
\[
dF(i \circ j)
=
(\mathcal{E}\circ F)(i)\bigl((\mathcal{E}\circ F)(j)(1_{F(u)})\bigr)
=
(\mathcal{E}\circ F)(i)\bigl(dF(j)\bigr).
\]
\end{proof}

\begin{remark}
The identity (\ref{eq:dFcomp})
suggests that dF behaves analogously to a transported multiplicative cocycle.
\end{remark}

\begin{remark}[Price vector potential]
The distortion $dF$ may be viewed as a multiplicative analogue of a
vector potential. It assigns to each arrow a local multiplicative
distortion, and its composition rule reflects the transport of this
distortion along composable paths.
\end{remark}

\section{Simplicial Admissibility}
\label{sec:simpAdmi}

Recall that $[n]$ denotes the totally ordered category
\[
0 \longrightarrow 1 \longrightarrow \cdots \longrightarrow n.
\]
The \emph{nerve} of a category $\mathcal T$ is the simplicial set
\[
N_\bullet(\mathcal T),
\qquad
N_n(\mathcal T)=\operatorname{Fun}([n],\mathcal T),
\]
whose $n$-simplices are functors from $[n]$ to $\mathcal T$.

Equivalently, an $n$-simplex
\[
\sigma\in N_n(\mathcal T)
\]
may be identified with a composable chain of morphisms
\[
t_0
\xrightarrow{i_1}
t_1
\xrightarrow{i_2}
\cdots
\xrightarrow{i_n}
t_n
\]
in $\mathcal T$,
which is also considered as a chain
\[
t_0
\xleftarrow{i_1}
t_1
\xleftarrow{i_2}
\cdots
\xleftarrow{i_n}
t_n
\]
in $\mathcal T^{op}$,

In particular,
\[
N_0(\mathcal T)
\]
consists of the objects of $\mathcal T$,
\[
N_1(\mathcal T)
\]
consists of the morphisms of $\mathcal T$,
and
\[
N_2(\mathcal T)
\]
consists of composable pairs of morphisms
\[
t_0 \to t_1 \to t_2.
\]

More generally, higher simplices encode composable chains of morphisms of increasing length. 
In the present context, such simplices may be viewed as paths in the time category $\mathcal T$,
along which distortions are transported and accumulated.

\medskip
\noindent
Proposition~\ref{prop:multiplDistort} suggests that the distortion
operator \(dF\) should naturally be regarded as an object associated not
only with morphisms of the time category \(\mathcal T\), but also with
composable chains of morphisms, namely simplices of the nerve
\(N_\bullet(\mathcal T)\).


For $n\ge 1$, let
\[
\delta_0^n:[n-1]\to[n]
\]
denote the coface map omitting the vertex $0$.
If
\[
\sigma=
\left(
t_0 \xrightarrow{i_1} t_1 \xrightarrow{i_2}
\cdots \xrightarrow{i_n} t_n
\right),
\]
then
\[
\sigma\circ\delta_0^n
=
\left(
t_1 \xrightarrow{i_2} t_2 \xrightarrow{i_3}
\cdots \xrightarrow{i_n} t_n
\right)
\]
is the tail simplex obtained by removing the first vertex and the first arrow.

\begin{definition}
\label{def:simplicialDF}
Let
\[
F:\mathcal T^{op}\to\Prob
\]
be a filtration.
For a simplex
$
\sigma\in N_n(\mathcal T),
$
define the simplicial distortion operator
$dF$
by
\begin{equation}
\label{eq:simplicialDF}
dF(\sigma)
	:=
\begin{cases}
1_{F\big(\sigma(0)\big)}
		& \text{if }
n = 0,
		\\
(\Exp \circ F)\big( \sigma(0 \to 1) \big)
\Big(1_{F\big(\sigma(1)\big)}\Big)
		& \text{if }
n = 1,
		\\
(\Exp \circ F)\big( \sigma(0 \to 1) \big)
\Big(1_{F\big(\sigma(1)\big)}\Big)
	\, \cdot \,
(\Exp \circ F)\big( \sigma(0 \to 1) \big)
\big(
	dF(
		\sigma \circ \delta^n_0
	)
\big) 
		& \text{if }
n > 1.
\end{cases}
\end{equation}
\end{definition}

\begin{remark}
The above construction is the multiplicative analogue of parallel transport along a path,
in which distortions are propagated and accumulated along the path.
The recursive definition of the simplicial distortion operator
involves transported multiplicative accumulation of distortions along the path $\sigma$.
It is also a generalization of the operator $dF$ over single morphisms defined in 
Section \ref{sec:distortion}.
\end{remark}

\begin{remark}
Let
\[
\sigma=
\left(
t_0\xrightarrow{i_1}t_1\xrightarrow{i_2}
\cdots\xrightarrow{i_n}t_n
\right)
\in N_n(\mathcal T).
\]
The recursive definition of $dF(\sigma)$ should not be confused with
the naive product of independently transported local distortions. Indeed,
conditional expectation operators are linear but not multiplicative in
general.

For instance, if
\[
\sigma=
\left(
t_0\xrightarrow{i_1}t_1\xrightarrow{i_2}t_2
\xrightarrow{i_3}t_3
\right),
\]
then
\[
\begin{aligned}
dF(\sigma)
&=
dF(i_1)\cdot
(\Exp\circ F)(i_1)
\left(
dF(i_2)\cdot
(\Exp\circ F)(i_2)\bigl(dF(i_3)\bigr)
\right) \
\\&=
(\Exp\circ F)(i_1)(1_{F(t_1)})\cdot
(\Exp\circ F)(i_1)
\left(
(\Exp\circ F)(i_2)(1_{F(t_2)})
\cdot
(\Exp\circ F)(i_2)
\bigl((\Exp\circ F)(i_3)(1_{F(t_3)})\bigr)
\right).
\end{aligned}
\]
Thus the simplicial distortion is a nested transported multiplicative
accumulation,
not a simple product of separately transported factors.
\end{remark}

\begin{remark}
The simplicial distortion $dF(\sigma)$ depends on the simplex
$\sigma$, not merely on the underlying endpoints.  
In particular, it records how local distortions are sampled, transported, and accumulated
along the composable chain represented by $\sigma$.  
Thus the present notion of distortion should be understood as a simplicial accumulated distortion
rather than as a curvature form in the usual differential geometric sense.

This distinction is important.  
In ordinary differential geometry,
parallel transport depends on a path, but not on an auxiliary choice of intermediate subdivision of that path.  
By contrast, the present construction is sensitive to the simplicial decomposition itself.
 This reflects the fact that the theory is built from the categorical composition of market transitions
and the corresponding conditional expectation transports. 
It would be interesting to investigate whether such accumulated
distortions admit a more intrinsic geometric formulation.
\end{remark}

\medskip
\noindent
Since products of \(L^1\)-random variables are not generally
\(L^1\)-integrable, 
we impose the following admissibility condition to ensure that all simplicial distortions are well-defined in the corresponding \(L^1\)-spaces.

\begin{definition}
A filtration
$
F:\mathcal T^{op}\to\Prob
$
is called \emph{simplicially admissible} if
\[
dF(\sigma)
	\in
L^1\big(F(\sigma(0))\big)
\]
for every simplex
$
\sigma
\in N_\bullet(\mathcal T).
$
\end{definition}

\medskip
\noindent
In the rest of this paper, we assume the filtration $F$ we consider is simplicially admissible.

\section{AB Arbitrage and Holonomy}
\label{sec:ABarbHolonomy}

We now define the global effect associated with the distortion $dF$ along loops in $\mathcal{T}$.


\begin{definition}[Holonomy]
Let $\gamma$ be a composable sequence of arrows in $\mathcal{T}^{op}$:
\[
\gamma = (i_1, i_2, \dots, i_n),
\]
where
\[
i_k : t_{k} \to t_{k-1}, \qquad k=1,\dots,n,
\]
and $t_n = t_0$, so that $\gamma$ forms a loop.

The \emph{holonomy} of $\gamma$ is defined by
\begin{equation}
\label{eq:holonomy}
\mathrm{Hol}(\gamma)
	:=
dF(\gamma)
	\in
L^1(F(t_0)) .
\end{equation}
\end{definition}

\begin{remark}
The holonomy $\Hol(\gamma)$ is obtained by recursively transporting
and multiplicatively accumulating local distortions along the loop
$\gamma$. Although the transport functor $\Exp\circ F$ itself is
functorial, the cumulative effect of transported distortions along
loops may produce nontrivial global holonomy phenomena.
\end{remark}

\medskip
\noindent
We now define AB arbitrage.

\begin{definition}[Weak AB-arbitrage / non-trivial holonomy effect]
We say that the system exhibits weak AB-arbitrage (or a non-trivial holonomy effect) 
if there exists a loop $\gamma$ such that
\begin{equation}
\label{eq:weakABarb}
\mu_{t_0}\bigl(\Hol(\gamma)\neq 1\bigr) > 0,
\end{equation}
where
$\mu_t$
is the probability measure of
$F(t)$.
\end{definition}

\begin{remark}
Weak AB-arbitrage represents a global inconsistency in the system 
that is not detectable at the level of individual transitions.
It appears only as a global inconsistency detected through the holonomy along a loop.

It does not by itself imply the existence of a realizable arbitrage opportunity.
\end{remark}

\begin{definition}[AB-arbitrage]
We say that the system exhibits \emph{Aharonov--Bohm (AB) arbitrage}
if there exists a loop $\gamma$ such that
\begin{equation}
\label{eq:ABarb}
\mu_{t_0}\bigl(\Hol(\gamma) > 1\bigr) > 0.
\end{equation}
\end{definition}

\begin{remark}
The condition 
(\ref{eq:ABarb})
represents the existence of a potential arbitrage opportunity associated with the loop $\gamma$.
In the next section, we show that, under suitable admissibility conditions,
this can be realized as an actual arbitrage via a base-time observable self-financing trading strategy.
\end{remark}

\medskip
\noindent
The following proposition shows that trivial distortion yields no holonomy.

\begin{proposition}
If $dF(i)=1_{F(t_k)}$ for all arrows $i: t_{k+1}\to t_{k}$ in $\mathcal T^{op}$, then
\[
\mathrm{Hol}(\gamma)=1_{F(t_0)}
\]
for every loop $\gamma$.
\end{proposition}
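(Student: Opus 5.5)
The plan is to prove a slightly stronger statement by induction on the length $n$ of the simplex. The claim is: for every simplex $\sigma\in N_n(\mathcal T)$ all of whose arrows have trivial distortion, we have $dF(\sigma)=1_{F(\sigma(0))}$. The proposition then follows at once. A loop $\gamma=(i_1,\dots,i_n)$ with $t_n=t_0$ is in particular an $n$-simplex $t_0\xleftarrow{i_1}t_1\leftarrow\cdots\xleftarrow{i_n}t_n$ with base vertex $t_0$, and by definition $\Hol(\gamma)=dF(\gamma)$. Note that the recursion in Definition~\ref{def:simplicialDF} never uses the loop condition $t_n=t_0$, so treating $\gamma$ as an arbitrary simplex loses nothing.

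The base cases come directly from Definition~\ref{def:simplicialDF}. For $n=0$ we have $dF(\sigma)=1_{F(\sigma(0))}$ by definition. For $n=1$ we have $dF(\sigma)=(\Exp\circ F)(i_1)(1_{F(t_1)})=dF(i_1)$, which equals $1_{F(t_0)}$ by the hypothesis.

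For the inductive step, take $n>1$ and apply the recursive clause. The tail simplex $\sigma\circ\delta^n_0$ has length $n-1$, base vertex $t_1$, and its arrows are still among those with trivial distortion. The induction hypothesis therefore gives $dF(\sigma\circ\delta^n_0)=1_{F(t_1)}$. Substituting into the recursion,
\[
dF(\sigma)=(\Exp\circ F)(i_1)(1_{F(t_1)})\cdot(\Exp\circ F)(i_1)(1_{F(t_1)})=dF(i_1)\cdot dF(i_1)=1_{F(t_0)}\cdot 1_{F(t_0)}=1_{F(t_0)},
\]
where the last equalities hold $\mu_{t_0}$-almost everywhere. This completes the induction.

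There is no substantial obstacle here. The only points needing care are bookkeeping: the base space of the tail simplex is $F(t_1)$ and not $F(t_0)$, so the induction has to run over all simplices rather than only over loops; and equalities in $L^1$ are almost-sure equalities, which is harmless because the product of a.e.-equal functions is a.e.-equal. Simplicial admissibility is not even needed, since every quantity that appears equals a constant function.
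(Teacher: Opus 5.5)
Your proof is correct and is essentially the paper's argument. Both induct on length and feed $dF(i_1)=1_{F(t_0)}$ and $dF(\gamma\circ\delta_0^n)=1_{F(t_1)}$ into the recursive clause. Your choice to induct over all simplices rather than only loops is the right formulation. The paper applies its hypothesis ``for loops of length $n-1$'' to the tail $\gamma\circ\delta_0^n$, which runs from $t_1$ back to $t_0$ and so is generally not a loop.
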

\begin{proof}
We prove the statement by induction on the length of the loop $\gamma$.

If $\gamma$ consists of a single arrow, then the statement follows directly from the assumption.

Assume the statement holds for loops of length $n-1$.
Let
\[
\gamma=(i_1,\dots,i_n)
\]
be composable morphisms in $\mathcal T^{op}$.

By Definition~\ref{def:simplicialDF},
\[
\Hol(\gamma)
=
dF(\gamma)
=
dF(i_1)\cdot
(\Exp\circ F)(i_1)
\bigl(
dF(\gamma\circ\delta_0^n)
\bigr).
\]
Since
\[
dF(i_1)=1_{F(t_0)},
\]
and by the induction hypothesis
\[
dF(\gamma\circ\delta_0^n)=1_{F(t_1)},
\]
we obtain
\[
\Hol(\gamma)=1_{F(t_0)}.
\]
\end{proof}

\medskip

\noindent
Thus, non-trivial holonomy reflects a genuine global inconsistency in the system.

\medskip

\noindent
We now turn to the construction of admissible trading strategies that realize such
potential arbitrage opportunities.

\section{Base-Time Observable Trading Strategy and Execution of Loops}
\label{sec:predTrading}

We now show how a non-trivial holonomy can be translated
into a base-time observable trading strategy.

\medskip

\noindent
Let $\gamma$ be a loop based at $t_0$, and let
\[
\mathrm{Hol}(\gamma) \in L^1(F(t_0))
\]
be its holonomy as defined in 
Section \ref{sec:ABarbHolonomy}.

\medskip

\noindent
Since $\mathrm{Hol}(\gamma)$ is measurable with respect to $F(t_0)$,
the decision whether to execute the loop can be made at the base time.

\medskip

Assume that the system exhibits AB arbitrage,
that is,
there exists a loop $\gamma$ such that
\[
\mu_{t_0}(\Hol(\gamma)>1)>0,
\]
and define the position
\begin{equation}
\label{eq:ABpositionTheta}
\theta^{\mathrm{AB}}_{\gamma, t_0}
	:=
1_{\{\Hol(\gamma)>1\}}.
\end{equation}

\medskip

\noindent
The trading strategy is as follows:
\begin{itemize}
  \item if $\Hol(\gamma) > 1$, execute the loop $\gamma$,
  \item otherwise, do nothing.
\end{itemize}

\begin{remark}[Execution of a loop]
To execute a loop $\gamma$, the trader commits at time $t_0$ to a sequence
of trades corresponding to the arrows of $\gamma$. Each trade is valued at
time $t_0$ via the functor $\mathcal{E}\circ F$, so that the entire strategy is
evaluated without using future information.

The resulting terminal wealth, expressed in units at time $t_0$, is given
by the holonomy $\mathrm{Hol}(\gamma)$. Thus, the loop represents a
self-financing round-trip strategy whose gain is determined by the
global distortion along $\gamma$.
\end{remark}


\noindent
This decision is $F(t_0)$-measurable.

\medskip

\noindent
Assuming that the sequence of trades corresponding to $\gamma$
can be executed using the observed distortions $dF(i_k)$,
the resulting gain from one unit of initial capital is
\begin{equation}
\label{eq:ABgainOneUnit}
V_{t_0}^{\mathrm{AB}}
=
1
	+
\theta^{\mathrm{AB}}_{\gamma, t_0}
\bigl(\mathrm{Hol}(\gamma)-1\bigr).
\end{equation}

\medskip

\begin{proposition}
\label{prop:ABarb}
For the strategy
$
\theta^{\mathrm{AB}}_{\gamma, t_0}
$,
we have
\[
V_{t_0}^{\mathrm{AB}} \ge 1,
\]
with strict inequality on the set
\[
\{\Hol(\gamma) > 1\}.
\]
\end{proposition}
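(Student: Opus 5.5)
The argument is a pointwise case split on the $F(t_0)$-measurable set $A:=\{\Hol(\gamma)>1\}$, using the explicit formula (\ref{eq:ABgainOneUnit}) for $V_{t_0}^{\mathrm{AB}}$ and the definition (\ref{eq:ABpositionTheta}) of the position.

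First I would make sure everything is well-posed. By simplicial admissibility, $\Hol(\gamma)=dF(\gamma)\in L^1(F(t_0))$. Hence $\Hol(\gamma)$ has a real-valued representative defined $\mu_{t_0}$-almost everywhere. It follows that $A$ is an $F(t_0)$-measurable set, well defined up to a $\mu_{t_0}$-null set. Consequently $\theta^{\mathrm{AB}}_{\gamma,t_0}=1_A$ is an $F(t_0)$-measurable indicator, and $V_{t_0}^{\mathrm{AB}}=1+1_A(\Hol(\gamma)-1)$ is an $F(t_0)$-measurable random variable. All inequalities below are meant $\mu_{t_0}$-almost surely, and they do not depend on which representative of $\Hol(\gamma)$ is chosen.

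Next I would split the base space $F(t_0)$ into $A$ and its complement.
\begin{itemize}
\item On $A^c=\{\Hol(\gamma)\le 1\}$ we have $1_A=0$, so $V_{t_0}^{\mathrm{AB}}=1$. The loop is simply not executed.
\item On $A$ we have $1_A=1$, so $V_{t_0}^{\mathrm{AB}}=\Hol(\gamma)>1$.
\end{itemize}
Combining the two cases gives $V_{t_0}^{\mathrm{AB}}=\max\{1,\Hol(\gamma)\}\ge 1$ almost surely, with strict inequality exactly on $A$. Under the AB-arbitrage assumption (\ref{eq:ABarb}), $\mu_{t_0}(A)>0$. So the strict gain occurs on a set of positive measure, which is the economically relevant content of the statement.

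There is no real analytic obstacle here. The only point needing care is the almost-everywhere bookkeeping in the first step: $\Hol(\gamma)$ is an equivalence class in $L^1$, so the claimed inequalities must be read modulo $\mu_{t_0}$-null sets. That reading is consistent because changing the representative alters $A$ only by a null set.
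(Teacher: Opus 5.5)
Your proof is correct and is the same argument as the paper's, which simply declares the result immediate from the definition of $\theta^{\mathrm{AB}}_{\gamma,t_0}$; your case split on $\{\Hol(\gamma)>1\}$, giving $V_{t_0}^{\mathrm{AB}}=\max\{1,\Hol(\gamma)\}$, spells out that remark. Your extra care in reading the inequalities $\mu_{t_0}$-almost surely, since $\Hol(\gamma)$ is only an $L^1$ class, is a sensible refinement that the paper leaves implicit.
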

\begin{proof}
Immediate from the definition of 
$\theta^{\mathrm{AB}}_{\gamma, t_0}$.
\end{proof}

\noindent
Therefore, if
$
\mu_{t_0}\bigl(\mathrm{Hol}(\gamma) > 1\bigr)>0,
$
the position
$
\theta^{\mathrm{AB}}_{\gamma, t_0}
$
yields an executable arbitrage opportunity.

\medskip

\noindent
Next, assume that the system exhibits weak AB arbitrage.
Fix $\varepsilon \ge 0$, and define
\begin{equation}
\label{eq:weakABpositionTheta}
\theta^{\mathrm{wAB}}_{\gamma, t_0}
:=
\mathbf{1}_{\{\mathrm{Hol}(\gamma)>1+\varepsilon\}}
-
\mathbf{1}_{\{\mathrm{Hol}(\gamma)<1-\varepsilon\}}.
\end{equation}

\noindent
The trading strategy is as follows:
\begin{itemize}
\item if $\mathrm{Hol}(\gamma)>1+\varepsilon$, execute the loop $\gamma$,
\item if $\mathrm{Hol}(\gamma)<1-\varepsilon$, execute the reverse loop $\gamma^{-1}$,
\item otherwise, do nothing.
\end{itemize}

\medskip

\noindent
Under suitable assumptions on reverse execution, the resulting wealth satisfies
\[
V_{t_0}^{\mathrm{wAB}} \ge 1,
\]
with strict inequality on the set
\[
\{|\Hol(\gamma)-1|>\varepsilon\}.
\]

\begin{remark}[Reverse execution]
The reverse loop $\gamma^{-1}$ is obtained by reversing the order of trades
and inverting each transaction. This does not require the existence of
categorical inverses in $\mathcal T$, but rather corresponds to the
availability of opposite trades in the market.

Formally, executing $\gamma^{-1}$ amounts to applying the inverse sequence
of exchanges, 
so that its effect 
can be interpreted as approximately corresponding to the reciprocal
$
\mathrm{Hol}(\gamma)^{-1}.
$
\end{remark}

\begin{remark}
In both strategies,
the key feature is that the entire decision is made using information
available at time $t_0$. No future information is required.
Thus, the arbitrage arises from a global observable encoded in the
holonomy, rather than from anticipative behavior.
\end{remark}

\medskip

\noindent
This shows that non-trivial holonomy is not merely a structural or
cohomological artifact, but can be translated into an economically
meaningful trading strategy, thereby realizing arbitrage.

\section{Illustrative Examples}
\label{sec:example}

In this section,
we present two finite examples illustrating how non-trivial holonomy arises from non-measure-preserving transitions.

\medskip

\noindent
Let $\mathcal{T}$ be a category with objects $t_0,t_1,t_2$ and arrows
\[
t_0 \xrightarrow{i_1} t_1 \xrightarrow{i_2} t_2 \xrightarrow{i_3} t_0,
\]
forming a loop $\gamma=(i_1,i_2,i_3)$.

\subsection{A simple finite example}

Define a contravariant functor $F:\mathcal{T}^{\mathrm{op}}\to \mathrm{Prob}$ by
\[
F(t_0)=\bigl(\{0,1\}, 2^{\{0,1\}}, \mu_0\bigr), \quad
F(t_1)=\bigl(\{\ast\}, 2^{\{\ast\}}, \mu_1\bigr), \quad
F(t_2)=\bigl(\{0,1\}, 2^{\{0,1\}}, \mu_2\bigr),
\]
where
\[
\mu_0(\{0\})=\mu_0(\{1\})=\tfrac12, \qquad
\mu_1(\{\ast\})=1, \qquad
\mu_2(\{0\})=\tfrac14,\ \mu_2(\{1\})=\tfrac34.
\]

The corresponding measurable maps are defined by
\[
F(i_1):\{\ast\}\to\{0,1\},\ \ast\mapsto 1, \quad
F(i_2):\{0,1\}\to\{\ast\}, \quad
F(i_3):\{0,1\}\to\{0,1\},\ x\mapsto x.
\]

Then,
the associated distortions are given by
\[
dF(i_1)=2 \cdot 1_{\{1\}}, \quad
dF(i_2)=1, \quad
dF(i_3)=\frac{d\mu_0}{d\mu_2},
\]
so that
\[
dF(i_3)(0)=2, \qquad dF(i_3)(1)=\tfrac{2}{3}.
\]

A direct computation using the definition of holonomy yields
\[
\operatorname{Hol}(\gamma)=4 \cdot 1_{\{1\}}.
\]

In particular,
\[
\mu_0\bigl(\operatorname{Hol}(\gamma)>1\bigr)=\tfrac12>0,
\]
and the system exhibits AB arbitrage.
Moreover, using the strategy defined in 
Section \ref{sec:predTrading},
this yields a base-time observable arbitrage opportunity.

\subsection{A stronger finite example}

\noindent
In contrast to the previous example, 
this construction avoids loss of mass along the loop, ensuring that the resulting holonomy is pointwise nonnegative.
In other words,
we present a finite example in which the holonomy itself satisfies
\[
\mu_{t_0}\bigl(\Hol(\gamma)\ge 1\bigr)=1,
\qquad
\mu_{t_0}\bigl(\Hol(\gamma)>1\bigr)>0.
\]
Thus, in this case, executing the loop $\gamma$ directly yields an arbitrage opportunity.

\medskip

\noindent
We define a filtration $F:T^{\mathrm{op}}\to\Prob$ by
\[
F(t_0)=\bigl(\{0,1\},2^{\{0,1\}},\mu_0\bigr),\quad
F(t_1)=\bigl(\{a,b,c\},2^{\{a,b,c\}},\mu_1\bigr),\quad
F(t_2)=\bigl(\{u,v,w\},2^{\{u,v,w\}},\mu_2\bigr),
\]
with
\[
\mu_0(0)=\tfrac14,\ \mu_0(1)=\tfrac34,\quad
\mu_1(a)=\tfrac14,\ \mu_1(b)=\tfrac14,\ \mu_1(c)=\tfrac12,\quad
\mu_2(u)=\tfrac14,\ \mu_2(v)=\tfrac14,\ \mu_2(w)=\tfrac12.
\]

The measurable maps are given by
\[
F(i_1): a\mapsto 0,\ b\mapsto 1,\ c\mapsto 1,\quad
F(i_2): u\mapsto a,\ v\mapsto c,\ w\mapsto b,\quad
F(i_3): 0\mapsto u,\ 1\mapsto w.
\]

\medskip

We compute the associated distortions.

First, $(\mu_1\circ F(i_1)^{-1})=\mu_0$, hence
\[
dF(i_1)=1.
\]

Next,
\[
(\mu_2\circ F(i_2)^{-1})(a)=\tfrac14,\quad
(\mu_2\circ F(i_2)^{-1})(b)=\tfrac12,\quad
(\mu_2\circ F(i_2)^{-1})(c)=\tfrac14,
\]
so
\[
dF(i_2)(a)=1,\quad dF(i_2)(b)=2,\quad dF(i_2)(c)=\tfrac12.
\]

Finally,
\[
(\mu_0\circ F(i_3)^{-1})(u)=\tfrac14,\quad
(\mu_0\circ F(i_3)^{-1})(v)=0,\quad
(\mu_0\circ F(i_3)^{-1})(w)=\tfrac34,
\]
hence
\[
dF(i_3)(u)=1,\quad dF(i_3)(v)=0,\quad dF(i_3)(w)=\tfrac32.
\]

\medskip

We now compute the holonomy.

Since $h_3(\gamma)=dF(i_3)$, we obtain
\[
(\Exp \circ F)(i_2)(h_3)(a)=1,\quad
(\Exp \circ F)(i_2)(h_3)(b)=3,\quad
(\Exp \circ F)(i_2)(h_3)(c)=0,
\]
and therefore
\[
h_2(a)=1,\quad h_2(b)=6,\quad h_2(c)=0.
\]

Since $dF(i_1)=1$, we conclude
\[
\Hol(\gamma)(0)=1,\qquad
\Hol(\gamma)(1)=2,
\]
that is,
\[
\Hol(\gamma)=1_{\{0\}}+2 \cdot 1_{\{1\}}.
\]

\medskip

In particular,
\[
\mu_0\bigl(\Hol(\gamma)\ge 1\bigr)=1,
\qquad
\mu_0\bigl(\Hol(\gamma)>1\bigr)=\tfrac34>0.
\]

Thus, the loop $\gamma$ itself yields an arbitrage opportunity without requiring selective execution.

\section{Admissibility and AB Arbitrage}
\label{sec:admissibility}

In the previous sections, we introduced holonomy as a global invariant
associated with the distortion $dF$, and showed how it can be used to define
a base-time observable trading strategy. 
However, not every loop corresponds to a financially realizable trading strategy.

In this section, we formalize the notion of admissibility, which ensures that
a loop can be interpreted as an executable and economically meaningful
trading strategy.

\subsection{Admissible loops}

The construction in
Section \ref{sec:predTrading}
shows that holonomy is observable at the base time.
Whether this observable can be converted into an actual trading strategy
depends on additional market conditions,
which are formalized below as admissibility.

Let $\gamma=(i_1,\dots,i_n)$ be a loop based at $t_0$.

\begin{definition}[Admissible loop]
We say that a loop $\gamma$ is admissible 
if the following conditions hold:

\begin{enumerate}
\item \textbf{Observability:} The holonomy $\mathrm{Hol}(\gamma)$ belongs to
$L^1(F(t_0))$, and is measurable with respect to $F(t_0)$.

\item \textbf{Executability:} Each arrow $i_k : t_{k-1}\to t_k$ corresponds
to an available market transaction.

\item \textbf{Composability:} The sequence of transactions along $\gamma$
can be composed sequentially, so that the output of each step can be used
as the input of the next.

\item \textbf{Self-financing:} The loop can be executed without injecting or
withdrawing additional capital after time $t_0$.


\item \textbf{Reverse executability (optional):}
The reverse sequence of trades
$\gamma^{-1}$
is also executable 
when reverse execution is required.

\end{enumerate}
\end{definition}

\begin{remark}
The above conditions separate the geometric structure of the cocycle from
the financial structure of the market. In particular, admissibility depends
on market microstructure and is not determined purely by the category
$\mathcal{T}$.
\end{remark}

\subsection{Self-financing interpretation}

We give a more explicit description of the self-financing condition.

\medskip

\noindent
Let $V_0=1$ be the initial wealth at time $t_0$.
Define recursively
\[
V_k = V_{k-1} \cdot \widetilde d_k,
\qquad k=1,\dots,n,
\]
where $\widetilde d_k$ denotes the valuation at time $t_0$ of the $k$-th leg
of the loop.

Then, the terminal wealth obtained from one unit of initial capital is
\[
V_n = \mathrm{Hol}(\gamma).
\]

Thus, the holonomy represents the terminal wealth of a self-financing
round-trip strategy.

\subsection{Admissible AB arbitrage}

We can now refine the notion of AB arbitrage.

\begin{definition}[Admissible AB arbitrage]
We say that the system exhibits \emph{admissible AB arbitrage} if there exists
an admissible loop $\gamma$ based at $t_0$ such that
\[
\mu_{t_0}\bigl(\mathrm{Hol}(\gamma) > 1\bigr) > 0.
\]
\end{definition}
This condition ensures that the strategy
$
\theta^{\mathrm{AB}}_{\gamma,t_0}
$
constructed in 
Section \ref{sec:predTrading}
yields an arbitrage opportunity.

If both $\gamma$ and its reverse $\gamma^{-1}$ are admissible, we may equivalently
consider the sign-based strategy
$
\theta^{\mathrm{wAB}}_{\gamma,t_0}
$
defined in
(\ref{eq:weakABpositionTheta})
for $\varepsilon \ge 0$.

\begin{proposition}
Let $\gamma$ be an admissible loop based at $t_0$.
If
\[
\mu_{t_0}\bigl(|\mathrm{Hol}(\gamma)-1|>\varepsilon\bigr)>0,
\]
then the strategy defined by
$\theta^{\mathrm{wAB}}_{\gamma,t_0}$
yields a base-time observable
self-financing trading strategy with nonnegative payoff and positive
probability of strict gain.
\end{proposition}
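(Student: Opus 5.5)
The plan is to verify the three claimed properties of $\theta^{\mathrm{wAB}}_{\gamma,t_0}$ one at a time: base-time observability, self-financing, and nonnegative payoff with positive probability of strict gain. Throughout, the admissibility conditions on $\gamma$ (including reverse executability, which is needed whenever $\theta^{\mathrm{wAB}}_{\gamma,t_0}$ takes the value $-1$) and the self-financing interpretation of Section~\ref{sec:admissibility} are taken as given.

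\textbf{Observability.} By the observability condition, $\Hol(\gamma)\in L^1(F(t_0))$ is $F(t_0)$-measurable. Hence the sets
\[
A_+:=\{\Hol(\gamma)>1+\varepsilon\},\qquad A_-:=\{\Hol(\gamma)<1-\varepsilon\}
\]
are measurable in $F(t_0)$. They are disjoint because $\varepsilon\ge 0$. It follows that $\theta^{\mathrm{wAB}}_{\gamma,t_0}=\mathbf 1_{A_+}-\mathbf 1_{A_-}$ is an $F(t_0)$-measurable random variable with values in $\{-1,0,1\}$. This is exactly the base-time observability required.

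\textbf{Self-financing.} I split $\Omega_{t_0}$ into $A_+$, $A_-$ and the remaining set $A_0$.
\begin{itemize}
\item On $A_0$ no trade is made, so wealth stays at $V=1$.
\item On $A_+$ the loop $\gamma$ is executed. Executability, composability and self-financing apply, and by the recursion $V_k=V_{k-1}\widetilde d_k$ of the self-financing interpretation the terminal wealth is $V=\Hol(\gamma)$.
\item On $A_-$ the reverse loop $\gamma^{-1}$ is executed. Reverse executability makes this a self-financing round trip.
\end{itemize}
Since the decision is fixed at $t_0$ and each branch injects no capital after $t_0$, the combined strategy is self-financing.

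\textbf{Payoff.} On $A_+$ we get $V-1=\Hol(\gamma)-1>\varepsilon\ge 0$. On $A_0$ we get $V-1=0$. The delicate case is $A_-$, and this is the main obstacle. The paper only states, in the remark on reverse execution, that the reverse loop's effect corresponds \emph{approximately} to $\Hol(\gamma)^{-1}$. I would therefore make this an explicit standing hypothesis, encoded in the phrase ``under suitable assumptions on reverse execution'': the terminal wealth of $\gamma^{-1}$ from unit capital is $\Hol(\gamma)^{-1}$, or at least some value $\ge \Hol(\gamma)^{-1}$. With this hypothesis, on $A_-$ we have $0\le\Hol(\gamma)<1-\varepsilon\le 1$, so $V\ge \Hol(\gamma)^{-1}>1$. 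Nonnegativity of $\Hol(\gamma)$ here follows from the fact that conditional expectations of nonnegative functions are nonnegative, so $\Hol(\gamma)\ge 0$ by induction along Definition~\ref{def:simplicialDF}. On the null-mass subset $\{\Hol(\gamma)=0\}$ the reverse value is interpreted as $+\infty$, or the execution is capped. Combining the three cases gives $V^{\mathrm{wAB}}_{t_0}\ge 1$ everywhere, with strict inequality on $A_+\cup A_-=\{|\Hol(\gamma)-1|>\varepsilon\}$.

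Finally, the hypothesis $\mu_{t_0}(|\Hol(\gamma)-1|>\varepsilon)>0$ gives strict gain with positive probability, which completes the argument in parallel with Proposition~\ref{prop:ABarb}.
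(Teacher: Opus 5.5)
Your proposal is correct and follows the paper's own route. The paper's proof simply cites the construction of Section~\ref{sec:predTrading} (including its assumptions on reverse execution) together with admissibility; you unpack exactly this into the observability, self-financing and three-case payoff checks, and you usefully make the reverse-execution hypothesis and the nonnegativity of $\Hol(\gamma)$ explicit.

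One small correction: $\{\Hol(\gamma)=0\}$ need not be $\mu_{t_0}$-null. In the paper's first example $\Hol(\gamma)=4\cdot 1_{\{1\}}$ vanishes on a set of measure $\tfrac12$, which lies in $A_-$ when $\varepsilon<1$. So your $+\infty$/capping convention there is a substantive part of the reverse-execution hypothesis rather than a negligible technicality, although it does not affect the conclusion $V^{\mathrm{wAB}}_{t_0}\ge 1$.
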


\begin{proof}
The result follows from the construction in 
Section \ref{sec:predTrading},
together with admissibility, which guarantees that the strategy is executable and
self-financing.
\end{proof}

\subsection{Discussion}

The notion of admissibility highlights a key structural feature of AB
arbitrage:
holonomy captures a global inconsistency in the system,
while admissibility ensures that this inconsistency can be monetized.

Thus, AB arbitrage arises from the interaction between
the categorical structure of the filtration and the economic structure
of the market.

\section{Conclusion}
\label{sec:conclusion}

In this paper, we developed a simplicial formulation of
Aharonov--Bohm arbitrage based on categorical filtrations and recursive
distortion transport.
Viewing a filtration
\[
F:\mathcal T^{op}\to\Prob
\]
as a contravariant functor on a time category $\mathcal T$, we
constructed a simplicial distortion operator
\[
dF:N_\bullet(\mathcal T)\to
\bigsqcup_{t\in\mathcal T}L^1(F(t))
\]
defined recursively on the nerve of $\mathcal T$.

The simplicial viewpoint clarifies that arbitrage phenomena may arise as
global effects produced by recursively accumulated transported
distortions along loops.
The resulting holonomy
\[
\Hol(\gamma):=dF(\gamma)
\]
provides a categorical analogue of geometric holonomy and serves as an
observable of global market inconsistency.
In this sense, AB arbitrage is interpreted not as a local failure of
transport, but as a global phenomenon emerging from multiplicative
accumulation along loops in the time category.

The recursive simplicial formulation provides a mathematically consistent framework
for transporting and recursively accumulating local distortions along composable chains.
In particular,
transporting accumulated distortions to a common probability space at each stage
naturally resolves the typing inconsistency of naive multiplicative constructions.
This leads naturally to the notion of simplicial admissibility,
guaranteeing integrability of recursively accumulated distortions.

The simplicial structure further suggests deeper relations between
global arbitrage phenomena and homotopy-theoretic structures associated
with the time category.
Indeed, the nerve $N_\bullet(\mathcal T)$ admits a geometric
realization
\[
B\mathcal T := |N_\bullet(\mathcal T)|,
\]
the classifying space of $\mathcal T$.
From this viewpoint, loops in the time category may be interpreted as
categorical analogues of topological loops, suggesting possible future
connections between AB arbitrage, holonomy representations, and the
topology of $B\mathcal T$.

We hope that the simplicial and categorical viewpoint developed here
will contribute to a broader geometric understanding of arbitrage and
global market structures.


\end{document}